\newtheorem{definition}{Definition}
\newtheorem{lemma}{Lemma}
\newcommand{\dInt}{{\mathrm{d}}}
\newcommand{\id}{{\mathds{1}}}
\newcommand{\tr}{{\mathrm{Tr}}}
\newcommand{\re}{{\mathrm{Re}}}
\newcommand{\im}{{\mathrm{Im}}}
\newcommand{\varO}{\mathcal{O}} 
\newcommand{\pd}[1]{\frac{\partial}{\partial #1}}
\begin{document}

\title{Phase-Sensitive Quantum Measurement without Controlled Operations}

\begin{abstract}

Many quantum algorithms rely on the measurement of complex quantum amplitudes. Standard approaches to obtain the phase information, such as the Hadamard test, give rise to large overheads due to the need for global controlled-unitary operations. 
We introduce a quantum algorithm based on complex analysis that overcomes this problem for amplitudes that are a continuous function of time.
Our method only requires the implementation of real-time evolution and a shallow circuit that approximates a short imaginary-time evolution. We show that the method outperforms the Hadamard test in terms of circuit depth and that it is suitable for current noisy quantum computers when combined with a simple error-mitigation strategy.
\end{abstract}

\author{Yilun Yang}
\affiliation{Max-Planck-Institut f\"ur Quantenoptik, Hans-Kopfermann-Str.\ 1, D-85748 Garching, Germany}
\affiliation{Munich Center for Quantum Science and Technology (MCQST), Schellingstr. 4, D-80799 M\"unchen}
\author{Arthur Christianen}
\affiliation{Max-Planck-Institut f\"ur Quantenoptik, Hans-Kopfermann-Str.\ 1, D-85748 Garching, Germany}
\affiliation{Munich Center for Quantum Science and Technology (MCQST), Schellingstr. 4, D-80799 M\"unchen}
\author{Mari Carmen Ba\~nuls}
\affiliation{Max-Planck-Institut f\"ur Quantenoptik, Hans-Kopfermann-Str.\ 1, D-85748 Garching, Germany}
\affiliation{Munich Center for Quantum Science and Technology (MCQST), Schellingstr. 4, D-80799 M\"unchen}	
\author{Dominik S. Wild}
\affiliation{Max-Planck-Institut f\"ur Quantenoptik, Hans-Kopfermann-Str.\ 1, D-85748 Garching, Germany}
\affiliation{Munich Center for Quantum Science and Technology (MCQST), Schellingstr. 4, D-80799 M\"unchen}
\author{J. Ignacio Cirac}
\affiliation{Max-Planck-Institut f\"ur Quantenoptik, Hans-Kopfermann-Str.\ 1, D-85748 Garching, Germany}
\affiliation{Munich Center for Quantum Science and Technology (MCQST), Schellingstr. 4, D-80799 M\"unchen}	

\date{\today}							
\maketitle

\paragraph{{Introduction.---}}

The complex phases of quantum amplitudes play an essential role in quantum algorithms~\cite{Shor1994, Kitaev1995, Knill2007, Harrow2009, Wiebe2016, Clinton2021} and quantum sensing~\cite{Degen2017}. Many algorithms require measuring the relative phase between two quantum states~\cite{Aharonov2006, OBrien2019, Somma2019, McArdle2020, Lu2021, Lin2022, Huggins2022, Ding2023, Patti2023, Ni2023}. A common subroutine for this purpose is the Hadamard test, which converts phase information into probabilities by means of interference~\cite{Cleve1998}. Despite impressive experimental progress, 
the Hadamard test remains out of reach for most applications owing to the challenge of implementing the required controlled-unitary operation. In this Letter, we propose an alternative method to determine the complex overlap between certain states that uses no ancillary qubits or global controlled-unitary operations. Unlike other ancilla-free schemes~\cite{Russo2021, Lu2021}, our approach does not require the preparation of superpositions with a reference state, which are highly susceptible to noise~\cite{Laflamme1998a, Leibfried2005, Song2017, Omran2019, Wei2020, Bin2022}. Instead of interference, our method hinges on the principles of complex analysis.

\begin{figure}
    \centering
    \includegraphics[width=.45\textwidth]{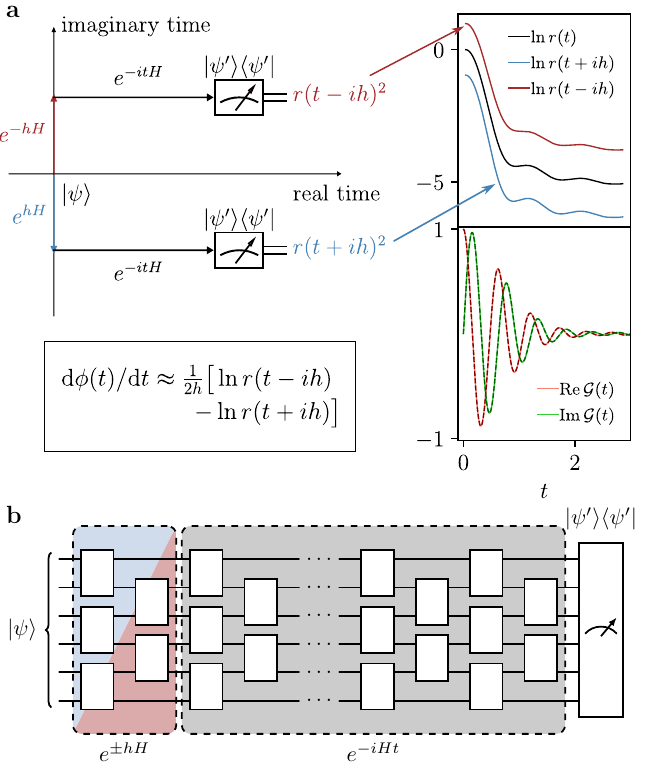}
    \caption{\textbf{(a)} The time derivative of the complex phase $\phi(t)$ of Loschmidt amplitude $\mathcal{G}(t)$ can be estimated from $r(t \pm ih) = | \braket{\psi' | e^{-i H t} e^{\pm h H} | \psi} |$. The right panel shows the result of our approach with $h = 0.1$ for the transverse-field Ising chain, Eq.~(\ref{eq:ising}), of length $N=40$. The solid lines on the lower right correspond to the complex Loschmidt amplitude obtained from the algorithm, while the overlapping dashed lines indicate the exact result. 
    \textbf{(b)} Circuit to measure $r(t \pm ih)$. For initial product states, the rescaled imaginary-time evolution has the same brickwork layout as a single real-time Trotter step.}
    \label{fig:scheme}
\end{figure}

The proposed approach applies to overlaps of the form of the (generalized) Loschmidt amplitude
\begin{eqnarray}
    \mathcal{G}(t) = \braket{\psi^{\prime} | e^{-iHt} |\psi},
    \label{eq:loschmidt}
\end{eqnarray}
where $H$ is a local Hamiltonian. Our algorithm requires that $\ket{\psi}$ has a short correlation length and that $\ket{\psi'}$ can be prepared using a unitary circuit. These assumptions are needed to be able to efficiently apply a short imaginary-time evolution to $\ket{\psi}$~\cite{Jones2019, Motta2020, Lin2021, Nishi2021, Jouzdani2022, Kamakari2022, Gilyen2019, Holmes2022, An2023} and to perform a projective measurement onto the final state $\ket{\psi'}$~\footnote{The measurement is defined by a complete, mutually orthogonal set of projection operators that includes $\ket{\psi'}\bra{\psi'}$. We assign $1$ to an outcome of $\ket{\psi'}$ and $0$ all other outcomes. For any state that can be efficiently prepared from a product state in the computational basis using unitary circuits, one can invert the preparation circuit and perform local measurements in the computational basis to realize the projective measurement. Such states include time evolved product states and injective matrix product states~\cite{malz2023}.}. The absolute value $|\mathcal{G}(t)|$ can be obtained by repeatedly evolving $\ket{\psi}$ and averaging over projective measurements onto $\ket{\psi'}$. Here we describe how to obtain the phase.

Equation~(\ref{eq:loschmidt}) includes several cases of interest. When $\ket{\psi'} = \ket{\psi}$, $\mathcal{G}(t)$ is the Fourier transform of the local density of states, which has applications in the study of quantum chaos~\cite{Benenti2002, Andersen2006}, in optimal measurements of multiple expectation values~\cite{Huggins2022}, and in estimating energy eigenvalues~\cite{Somma2019, OBrien2019, Lin2022, Ding2023, Ni2023}. The case when $\ket{\psi'} = A e^{-i H t'} \ket{\psi}$, for a local unitary $A$, 
is relevant for probing thermal properties of many-body systems~\cite{Lu2021,Schuckert2023,Hemery2023,Ghanem2023}.

The key idea underlying our method is to view $\mathcal{G}$ as a function of a complex variable $z$. Assuming that $\mathcal{G}(z)$ is analytic and nonzero, the Cauchy-Riemann equations imply that the real-time derivative of the phase of $\mathcal{G}(z)$ is equal to the derivative of $\ln |\mathcal{G}(z)|$ along the imaginary-time direction. We use this relation to obtain the desired phase by carrying out the following three steps on a quantum computer (see Fig.~\ref{fig:scheme}). First, a quantum circuit applies an evolution under the Hamiltonian $H$ for a short \emph{imaginary time} $h$ to the initial state $\ket{\psi}$~\cite{Jones2019, Motta2020, Lin2021, Nishi2021, Jouzdani2022, Kamakari2022}. Second, we evolve the system under $H$ for the \emph{real time} $t$. Third, we perform a projective measurement onto the state $\ket{\psi'}$ by inverting the circuit that prepares $\ket{\psi'}$ from a computational basis state, followed by measurements in the computational basis. Using these steps, we can estimate $|\mathcal{G}(t \pm i h)|$. This yields a finite-difference approximation to the imaginary-time derivative of $\ln |\mathcal{G}(z)|$, which is equal to the real-time derivative of the phase. We finally compute the phase of the Loschmidt amplitude by repeating these steps for different values of $t$ and numerically integrating the derivative, starting from a time at which the phase is known.

We show below that our method is efficient if $|\mathcal{G}(t \pm i h)|$ is bounded from below by an inverse polynomial in the system size $N$. For product states $\ket{\psi'} = \ket{\psi}$, however, the Loschmidt amplitude decays as a Gaussian function on a time scale $\mathcal{O}(1/\sqrt{N})$~\cite{Hartmann2004}. In this case, our approach will be inefficient even for short constant times, for which the Loschmidt amplitude can be computed by a polynomial-time, classical algorithm~\cite{Wild2023}. By contrast, no efficient classical algorithm is known for the case $\ket{\psi'} = A e^{- i H t'} \ket{\psi}$. Since the real- and imaginary-time evolution operators commute, our method can be used to compute the phase as a function of $t - t'$, with the expectation value $\braket{\psi | e^{i H t^{\prime}} A e^{- i H t^{\prime}} | \psi}$ serving as the reference for the integration. This is expected to be classically hard even for small $t - t'$ since computing the reference value at times $t' = \mathrm{poly}(N)$ is BQP-complete~\cite{Janzing2005}.

Although our approach is based on the analytic properties of a function of a continuous variable, we show below that it also works well in the discrete setting of Trotter evolution. Hence, the method applies to both circuit-based quantum computers and to analog quantum simulators supplemented by shallow circuits to implement the imaginary-time evolution. To demonstrate the suitability of the method for near-term quantum devices, we combine it with a simple error-mitigation strategy~\cite{Temme2017, Endo2018, Kandala2019, OBrien2021, Cai2022, Yang2023} and show numerically that the phase can be reliably recovered in a system of $N=24$ qubits. Beyond providing a viable alternative to the Hadamard test on near-term quantum computers, our method may be useful in the early fault-tolerant regime as the absence of controlled global operations significantly reduces the circuit depth.

\paragraph{{Theoretical approach.---}} 
To formally describe the algorithm, we consider the complex variable $z = t - i \beta$, where $t$ represents real time and $\beta$ stands for imaginary time or inverse temperature. The generalized Loschmidt amplitude, Eq.~(\ref{eq:loschmidt}), can be decomposed into its absolute value and phase according to
\begin{eqnarray}
    \mathcal{G}(z) = r(z) e^{i\phi(z)},
\end{eqnarray}
where $0 \le r(z) \le 1$ and $\phi(z)$ is real. In a system of finite size, the Loschmidt amplitude can be written as a sum of exponentials by expanding the states $\ket{\psi}$ and $\ket{\psi'}$ in the energy eigenbasis. The logarithm $\ln \mathcal{G}(z)$ is therefore holomorphic everywhere except when $\mathcal{G}(z) = 0$. For an  analytic branch of $\phi(z)$, the Cauchy-Riemann equations applied to $\ln \mathcal{G}(z) = \ln r (z) + i \phi(z)$ give
\begin{eqnarray}
    \pd{t} \phi(z) = \pd{\beta} [\ln r(z)].
    \label{eq:der}
\end{eqnarray}
Therefore, if $\mathcal{G}(t) \neq 0$ in the interval $[t_1, t_2]$, the phase difference $\phi(t_2) - \phi(t_1)$ can be computed as
\begin{eqnarray} \label{eq:int_phase}
    \phi(t_2) - \phi(t_1) = \int_{t_1}^{t_2} \pd{\beta} \left[\ln r(z)\right]_{|\beta=0} \dInt t.
\end{eqnarray}
If the phase $\phi(t_1)$ is known, then $\phi(t_2)$ may be computed from the partial derivative of $r(z)$ along the imaginary-time direction. In practice, we numerically approximate the partial derivative by the mid-point formula
\begin{eqnarray}
    \pd{\beta} [\ln r(z)]_{|\beta = 0} \approx \frac{\ln r(t - i h) - \ln r(t + i h)}{2h},
    \label{eq:der_approx}
\end{eqnarray}
where $h$ is a small parameter. 

This procedure is well defined for $r(t \pm i h) > 0$ in the interval $[t_1, t_2]$. To bound the computational errors, we make the stronger assumption that $|\ln r(z)| \le c N$ at all points in the complex plane within distance $a$ of the interval $[t_1, t_2]$, for constants $c$ and $a > h$. 
In the case of Trotter evolution, we make analogous assumptions for closely related functions~\cite{supp_mat}.
We highlight, however, that our approach can be extended to treat zeros in $\mathcal{G}(t)$ by separately considering the resulting discontinuities in the phase~\cite{supp_mat}.

The above analysis has reduced the problem to measuring the absolute values $r(t \pm i h)^2$. It involves nonunitary imaginary-time evolution which cannot be directly applied. However, Motta \textit{et al.}~\cite{Motta2020} showed that $e^{\pm h H}$ can be simulated by a unitary circuit for short times $h$ if the spatial correlations of $\ket{\psi}$ decay exponentially with correlation length $\xi$ and $H$ is a local Hamiltonian. For each local term $H_m$ in the Hamiltonian, it is possible to approximate $e^{\pm h {H_m}}\ket{\psi} \approx c_m^{\pm} V_m^\pm \ket{\psi}$, where $V_m^\pm$ is a local unitary and $c_m^{\pm} = \sqrt{\braket{\psi | e^{\pm 2hH_m} |\psi}}$ accounts for the normalization. Since $h$ is small, the product over all $V_m^\pm / c_m^\pm$ (in arbitrary order) is a good approximation of $e^{-h H}$. The operators $V_m^\pm$ act on $\varO((\xi \log N)^d)$ qubits in $d$ spatial dimensions and the complexity of computing $V_m^\pm$ is quasi-polynomial in $N$ for $d>1$. This renders the approach challenging for large $\xi$. Below, we focus on the simplest case of product states, for which the unitaries $V_m^\pm$ act on the same sites as $H_m$ and can be efficiently computed. The resulting circuit has the same structure as a single Trotter step~\cite{supp_mat}.

\paragraph{{Error analysis.---}}
We next analyze the error in the estimated phase arising from the different approximations in our algorithm.  The approximation error of the imaginary-time evolution is dominated by the first-order Trotter decomposition, which results in the phase error~\cite{supp_mat}
\begin{eqnarray}
    \Delta{\phi}_{\mathrm{ITE}} =\varO(Nth^2).
    \label{eq:err_qite}
\end{eqnarray}
The factor $t = t_2-t_1$ accounts for the accumulation of errors in the integral in Eq.~\eqref{eq:int_phase}. While the real-time evolution can be carried out exactly on analog quantum simulators, digital quantum computers incur an additional Trotter error, leading to the phase error~\cite{supp_mat}
\begin{eqnarray}
    \Delta{\phi}_{\mathrm{RTE}} = \varO(N t^2\tau^p).
    \label{eq:err_rte}
\end{eqnarray}
Here, $\tau$ is the time of a single Trotter step, $p$ is the order of the Trotter decomposition~\cite{Childs2021}, and we again included the accumulation of errors in Eq.~\eqref{eq:int_phase}. Numerical differentiation and integration give rise to additional errors. They can, however, be safely ignored for practical orders of the Trotter expansion ($p \le 4$) as they are asymptotically at most as big as $\Delta{\phi}_{\mathrm{ITE}}$ and $\Delta{\phi}_{\mathrm{RTE}}$~\cite{supp_mat}.

\begin{figure}
    \centering
    \includegraphics[width=.45\textwidth]{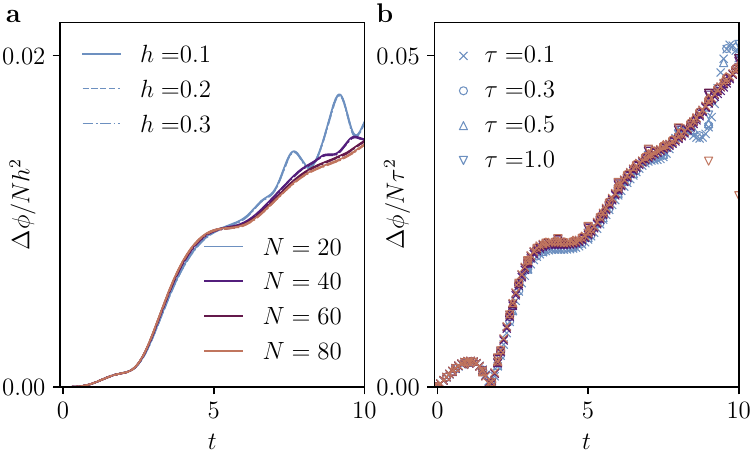}
    \caption{Error in the phase of Loschmidt amplitude, $\Delta \phi$, computed using our approach for the transverse-field Ising chain, Eq.~(\ref{eq:ising}). \textbf{(a)}~$\Delta \phi / N h^2$ as a function of time $t$ with fixed real-time Trotter step $\tau = 0.01$ for different values of the imaginary-time step $h$ and different system sizes $N$. \textbf{(b)}~$\Delta \phi / N \tau^2$ for $h = 0.01$ and different values of $\tau$ and $N$. The color coding is the same as in (a).
    }
    \label{fig:trotter} 
\end{figure}

We verify these analytic estimates using numerical results for the transverse-field Ising chain, whose Hamiltonian is given by
\begin{eqnarray}
    H = - J \sum_{i=1}^{N-1}S^z_i S^z_{i+1} + g\sum_{i=1}^{N} S^x_i.
    \label{eq:ising}
\end{eqnarray}
Throughout this work, we set $J = 1$ and $g = 0.5$, corresponding to the ferromagnetic phase. Both states $\ket{\psi} $ and $\ket{\psi^{\prime}}$ are chosen as $\ket{\uparrow\uparrow\uparrow\cdots}$. For the Trotter decomposition, we alternate between the ferromagnetic and transverse field terms.

Figure~\ref{fig:trotter} shows the error in the phase computed using our approach. The numerical results were obtained by matrix product state simulations with bond dimension 200, for which truncation errors are negligible~\cite{supp_mat}. In Fig.~\ref{fig:trotter}(a), we set $\tau = 0.01$ such that the error in the imaginary-time evolution dominates. The phase error collapses onto a single curve upon dividing by $N h^2$, which confirms the predicted error due to imaginary-time evolution, Eq.~(\ref{eq:err_qite}). Similarly, we set $h = 0.01$ in Fig.~\ref{fig:trotter}(b) to isolate the effect of the real-time Trotter error. The collapse of the data agrees with  Eq.~(\ref{eq:err_rte}).

In addition to numerical errors, any experiment incurs statistical errors. Given $M$ measurements, a probability $p$ estimated by counting successful outcomes will have a multiplicative error $\sqrt{( 1 - p) / Mp}$, governed by the standard deviation of the binomial distribution. According to Eq.~(\ref{eq:der_approx}), for the measured probabilities $p_{\pm}(t) = r(t \pm i h)^2 / \prod_m (c_{m}^{\pm})^2$, this contributes an additive error 
\begin{equation}
    \Delta{\phi}_\mathrm{S} = \varO\left( \frac{It}{h \sqrt{M}} \right)
\end{equation}
to the final phase for $M$ measurements per time step. The integral in Eq.~(\ref{eq:int_phase}) is included in the factor $I = \int_{t_1}^{t_2} \dInt t' \left[ \sqrt{1/{ p_+(t') }} + \sqrt{1/{ p_{-}(t') }} \right] / t$. In contrast to the previous errors, the statistical error depends on the magnitude of the measured probabilities. 

\begin{figure*}
    \centering
    \includegraphics[width=.95\textwidth,trim={0.2cm 0.3cm 0.2cm 0.2cm},clip]{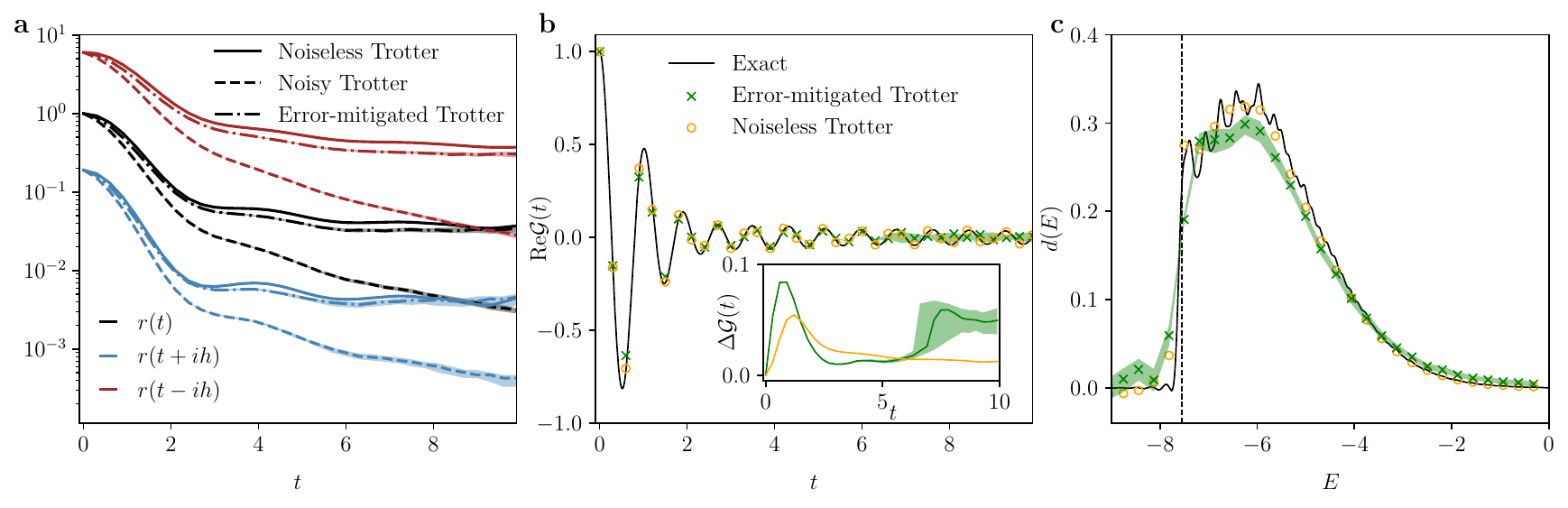}
    \caption{\textbf{(a)}~Absolute value of the Loschmidt amplitudes for an Ising chain of length $N=24$ with initial state $\ket{\uparrow \uparrow \uparrow \cdots}$ and Trotter step sizes $\tau = h = 0.3$. The dashed lines include single-qubit depolarizing noise with probability $\gamma= 3 \times 10^{-3}$ after each gate. The dash-dotted lines are obtained by the error mitigation described in the text. We quantify the statistical error of the error-mitigated curves by simulating $100$ experiments, each of which uses $M = 10^6$ measurements to estimate the survival probability. The dash-dotted line corresponds to the median of the $100$ experiments, while the shaded areas indicate the range between the first and third quartile.
    \textbf{(b)}~Real part of the Loschmidt amplitude computed from the data in (a) using our algorithm. The exact value under continuous time evolution is plotted for reference. The inset shows the absolute difference of the reconstructed values from the exact amplitude.
    \textbf{(c)}~The LDOS obtained through discrete Fourier transform from the data in (b). The vertical, dashed line indicates the exact ground state energy $E_0\approx-7.55$.}
    \label{fig:noisy}
\end{figure*}

\paragraph{{Comparison with existing methods.---}}

\begin{table}
    \centering
    \begin{tabular}{l l l}
        \hline 
        \hline
        Method & $D$ & $M$\\
        \hline
        Hadamard test & $\varO( t^{1+\frac{1}{p}} N^{1 + \frac{1}{d}+\frac{1}{p}} / \epsilon^{\frac{1}{p}} )$ \quad & 
        $\varO(  1 / \epsilon^2)$\\[0.3em]
        \makecell[l]{Sequential\\interferometry} & $\varO(r^{\frac{1}{p}} t^{1+\frac{1}{p}} N^{\frac{2}{p}}  / \epsilon^{\frac{1}{p}})$  & $\varO(\tilde{I}^2 r^2 N^2 / \epsilon^2 )$\\[0.3em]
        This work & $\varO( 
        r^{\frac{1}{p}}t^{1+\frac{2}{p}} N^{\frac{1}{p}} / \epsilon^{\frac{1}{p}} )$ &
        $\varO \left( I^2 r^3 t^3 N / \epsilon^3 \right)$\\
        \hline 
        \hline
    \end{tabular}
    \caption{Circuit depth $D$ and number of measurements $M$ to estimate the complex Loschmidt amplitude $\mathcal{G}$ with additive error $\epsilon$. All protocols use a real-time Trotter decomposition of order $p$. The Hadamard test is implemented using a single ancilla qubit with swap operations in $d$ spatial dimensions.  
    The latter two methods require $M$ measurements at each intermediate state or time step, but the corresponding values of the phase are also returned. For these approaches, we only consider initial product states and $\epsilon$ bounds the error $r \Delta \phi$ arising from the uncertainty in the phase. The quantities $\tilde{I}$ and $I$ depend on the intermediate amplitudes in these sequences, see text and supplemental material~\cite{supp_mat}.
    }
    \label{tab:comp}
\end{table}

To compare our approach to existing methods, we consider the error $\Delta \mathcal{G}$ in the complex Loschmidt amplitude $\mathcal{G}$. This error is related to the phase error, $\Delta \phi$, by $|\Delta \mathcal{G}|^2 = \Delta r^2 + (r \Delta \phi)^2$. Here, $\Delta r$ is the error from an independent measurement of $r$, which only requires the Trotterized circuit without imaginary-time evolution. To bound $\Delta r$ by $\epsilon$, we need a circuit of depth $D_r = \varO(t / \tau) = \varO(  t^{1+\frac{1}{p}} N^{\frac{1}{p}} / \epsilon^{\frac{1}{p}} )$ and a number of $M_r = \varO \left( 1 /\epsilon^2 \right)$ measurements~\cite{supp_mat}. For the term $r \Delta \phi$, we bound the individual contributions to the phase error. For instance, $r \Delta \phi_\mathrm{ITE} < \epsilon$ implies that $h = \varO (\sqrt{ \epsilon / r N t } )$. A similar bound on the real-time evolution gives $\tau = \varO ( ( \epsilon / r N t^2 )^\frac{1}{p} )$, resulting in the circuit depth $D = \varO( r^\frac{1}{p} t^{1+\frac{2}{p}} N^{\frac{1}{p}} / \epsilon^{\frac{1}{p}} )$. Bounding the statistical error yields the number of measurements $M = \varO ( I^2 r^3 t^3 N  / \epsilon^3)$ for each time step. We note that when $r$ is bounded from below by a constant, the cost of estimating $\phi$ dominates.

We compare this resource cost to the Hadamard test and sequential interferometry~\cite{Lu2021}. The latter method employs a reference state whose Loschmidt amplitude, including the phase, is known. The details of these two methods are described in the supplemental material~\cite{supp_mat}. Table~\ref{tab:comp} summarizes the resource cost for each method. For a constant evolution time $t$, the circuit depth needed for our algorithm is reduced by a factor $\varO(N^{1 + 1/d})$ compared to the Hadamard test with swaps, and by $\varO(N^{1/p})$ compared to sequential interferometry. This improvement is particularly significant for noisy quantum computers, for which circuit depth is the key limiting factor.

\paragraph{{Applications.---}}

For practical applications of our protocol, it is important to consider the role of noise. We propose a simple rescaling strategy based on previous work to mitigate the effects of noise~\cite{Yang2023}. The key observation is that errors are unlikely to drive the system towards the target state $\ket{\psi'}$. Hence, the measured probabilities are decreased in a consistent fashion, which can be mitigated by rescaling with the probability of having no noise. This is equivalent to zero-noise extrapolation with an exponential fitting function~\cite{Li2017,Temme2017,Kim2023}. 
Below, we simply use the known noise rate for rescaling. In practice, the rescaling factor can be determined by enhancing the noise or by measuring the survival probability after forward plus backward evolution~\cite{Yang2023}.

As a proof of concept, we apply our approach to compute the local density of states (LDOS) $d(E)$ through the Fourier transform
\begin{eqnarray}
    d(E) = \braket{\psi | \delta(E - H) | \psi} = \frac{1}{2 \pi} \int_{-\infty}^\infty \mathcal{G}(t) e^{i E t} \, \mathrm{d} t.
\end{eqnarray}
If the initial state has a sufficiently large overlap with the ground state, its LDOS enables determining the ground-state energy.
In quantum chemistry, this can hold even for product states, rendering our approach particularly suitable \cite{Bauer2020}.
We further note that our approach is compatible with recent proposals that classically process the Loschmidt amplitudes at different times in order to solve the general quantum eigenvalue estimation problem~\cite{Somma2019, supp_mat} with Heisenberg-limited scaling~\cite{Lin2022, Ding2023, Ni2023}. 

We apply our approach to compute the LDOS of $\ket{\psi}=\ket{\uparrow \uparrow \uparrow \cdots}$ in an Ising chain of system size $N=24$. We numerically carry out the Trotter evolution with Trotter steps $\tau = h = 0.3$ using the Cirq library~\cite{cirq}. We add single-qubit depolarizing noise of rate $\gamma=3\times 10^{-3}$ after each layer of the quantum circuit. We average over 5000 trajectories of a Monte Carlo wavefunction simulation to obtain the probabilities $p_{\pm}$. The results are shown in Fig.~\ref{fig:noisy}. Here we have included statistical noise by simulating the experimental sampling procedure (see caption).

Figure~\ref{fig:noisy}(a) shows that the depolarizing noise is mitigated well by rescaling $r^2(t)$ and $r^2(t\pm ih)$ by $(1-\gamma)^{ND}$. The error in the reconstructed Loschmidt amplitude remains small within the range of $t$ in Fig.~\ref{fig:noisy}(b). We estimate the LDOS of the initial state by a discrete Fourier transform of the data in Figure \ref{fig:noisy}(b) and similar data for the imaginary part of $\mathcal{G}(t)$. The energy resolution is $\pi / t_{\max} \approx 0.31$, determined by the maximum time $t_{\max} = 10$. We show the result in Fig.~\ref{fig:noisy}(c) for both noisy, error-mitigated (green) and noiseless (orange) Trotter simulations. For reference, we also include the exact result (black line), which is broadened by a Gaussian of width $0.08$. For both Trotter simulations, the first point with $d(E)>0.1$ appears at $E\approx -7.50$, while the exact ground state energy is $E_0\approx-7.55$.

\paragraph{{Summary and outlook.---}}

We propose a quantum algorithm to estimate the phase of Loschmidt amplitudes applicable to states with short-ranged correlations. It can replace and outperform the Hadamard test for amplitudes that arise from continuous time evolution under a local Hamiltonian. While our analysis focused on generalized Loschmidt amplitudes, the approach can be readily extended to multiple time-evolution operators~\cite{supp_mat}, which renders it applicable to many quantities of physical interest including transport coefficients \cite{Agarwal2015, Kanasz2017, Parker2019} and out-of-time-ordered correlators (OTOCs)~\cite{Hashimoto2017, Sajjan2023}. The algorithm requires no ancillary qubits or controlled operations. When combined with a simple error-mitigation strategy, our algorithm may enable phase-sensitive measurements on current noisy quantum devices for system sizes that out of reach for other methods.

\begin{acknowledgments}
\paragraph{{Acknowledgements.---}}
We thank Sandra Coll-Vinent, Sirui Lu and Thomas O'Brien for insightful discussions about the sequential interferometry method, and Henrik Dreyer, Khaldoon Ghanem, K\'evin H\'emery, and Daniel Malz for valuable suggestions on applications of this algorithm.
We acknowledge the support from the German Federal Ministry of Education and Research (BMBF) through FermiQP (Grant No. 13N15890) and EQUAHUMO (Grant No. 13N16066) within the funding program quantum technologies - from basic research to market. This research is part of the Munich Quantum Valley (MQV), which is supported by the Bavarian state government with funds from the Hightech Agenda Bayern Plus. DSW has received funding from the European Union’s Horizon 2020 research and innovation programme under the Marie Sk{\l}odowska-Curie Grant Agreement No. 101023276.
The work was partially supported by the Deutsche Forschungsgemeinschaft (DFG, German Research Foundation) under Germany's Excellence Strategy -- EXC-2111 -- 390814868.
\end{acknowledgments}

\bibliography{loschmidt}

\newpage
\appendix
\setcounter{secnumdepth}{3}

\begin{center}
    {\large\textbf{Supplemental Material}}
\end{center}

\section{Errors}

\subsection{Zero-free functions}
\label{appendix:zero-free}
To apply our algorithm to the time interval $[t_1, t_2]$, we require that $\mathcal{G}(t)$ is nonzero in this region. In practice, we have to place a lower bound on the magnitude of $\mathcal{G}(t)$ to guarantee a bounded error of the algorithm. We will refer to functions that satisfy such a lower bound as zero free, following the terminology introduced in reference~\cite{Harrow2020}. Owing to its similarity to the partition function, the Loschmidt amplitude generically takes the form $\mathcal{G}(z) \sim e^{-N g(z)}$ in the limit of large $N$ for some function $g(z)$~\cite{Touchette2009, Gambassi2012, Heyl2013, Harrow2020}. This behavior motivates the following definition of a zero-free function.
\begin{definition}[Zero-free functions]
    A sequence of functions $f_N(z)$ on the complex plane is called zero free at point $z_0$ if there exist constants $c$ and $a$ such that $\log |f_N(z)| \le cN$ for all $z$ satisfying $|z - z_0| < a$.
\end{definition}

Assuming that a function is zero free allows us to bound to the derivatives of $\ln f_N(z)$ according to the following lemma.
\begin{lemma}
    Consider a sequence of holomorphic functions $f_N(z)$, which are zero free at $t_0 \in \mathbb{R}$. We let $z = t - i\beta$ with $t, \beta \in \mathbb{R}$ and $\log f_N(z) = \log r_N(z) + i\phi_N(z)$ for some analytic branch of the logarithm. Given any constant $m$, the magnitude of the partial derivatives $\partial^{m} \ln r_N(z) / \partial \beta^{m}$ and $\partial^{m} \phi_N(z) / \partial t^{m}$ at $z = t_0$ is bounded from above by $\varO(N)$.
    \label{lemma:scaling}
\end{lemma}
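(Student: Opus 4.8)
The plan is to reduce both statements to a single Cauchy-type estimate on the complex logarithm $g_N(z) := \ln f_N(z)$ and then to control that estimate using only the real-part bound supplied by zero-freeness, via the Borel--Carath\'eodory theorem. Since $f_N$ is zero free at $t_0$ it has no zeros in the disk $|z-t_0|<a$, so for a fixed branch $g_N$ is holomorphic there with $\re\, g_N = \ln r_N$ and $\im\, g_N = \phi_N$. Writing $z = t - i\beta$, the chain rule gives $\partial_\beta g_N = -i g_N'$ and $\partial_t g_N = g_N'$, whence $\partial_\beta^{m}\ln r_N = \re[(-i)^m g_N^{(m)}]$ and $\partial_t^{m}\phi_N = \im[g_N^{(m)}]$. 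Both target quantities are therefore bounded in magnitude by $|g_N^{(m)}(t_0)|$, so it suffices to prove $|g_N^{(m)}(t_0)| = \varO(N)$ for each fixed $m \ge 1$ (the case $m=0$ lies outside the derivative claim).

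Next I would estimate $g_N^{(m)}(t_0)$ by Cauchy's integral formula on a circle of radius $r$ with $0 < r < a$. Subtracting the constant $g_N(t_0)$, which integrates to zero against $(z-t_0)^{-m-1}$ for $m \ge 1$, gives
\[
  |g_N^{(m)}(t_0)| \le \frac{m!}{r^{m}} \sup_{|z-t_0|=r} |g_N(z) - g_N(t_0)|.
\]
The point of this subtraction is that it eliminates any dependence on the phase $\phi_N(t_0)$ at the center, so that only modulus data can enter the final bound.

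The core step is then to bound $\sup|g_N - g_N(t_0)|$ by the real part of the same quantity. I would apply the Borel--Carath\'eodory theorem to $h_N := g_N - g_N(t_0)$, holomorphic on $|z-t_0|\le R$ with $R<a$ and $h_N(t_0)=0$, giving for $r<R$
\[
  \sup_{|z-t_0|=r} |h_N(z)| \le \frac{2r}{R-r}\, \sup_{|z-t_0|=R} \re\, h_N(z).
\]
Here $\re\, h_N(z) = \ln r_N(z) - \ln r_N(t_0)$. The zero-free hypothesis supplies $\ln r_N(z) \le cN$ on the disk, while the lower bound $\ln r_N(t_0) \ge -cN$ is the quantitative statement that $f_N$ is bounded away from zero at $t_0$, i.e.\ the stronger assumption $|\ln r| \le cN$ invoked in the main text. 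Hence $\re\, h_N \le 2cN$, and with fixed choices such as $R=a/2$, $r=a/4$ the two displays combine to $|g_N^{(m)}(t_0)| \le (m!/r^{m})\,(2r/(R-r))\,2cN = \varO(N)$, as claimed.

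The main obstacle here is conceptual rather than computational: the hypotheses control only the modulus $|f_N|$, a harmonic quantity, whereas the conclusion also concerns derivatives of the phase. Borel--Carath\'eodory is exactly the device that turns a one-sided bound on the real part of a holomorphic function into a two-sided bound on the whole function, so the phase derivatives come out controlled for free. The only genuine requirement beyond the stated upper bound is the lower bound $\ln r_N(t_0) \ge -\varO(N)$; without it a near-zero of $f_N$ at $t_0$ would make the derivatives of $\ln f_N$ blow up, which is precisely why the zero-free assumption must be read as a genuine two-sided bound consistent with the main text.
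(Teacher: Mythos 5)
Your proof is correct and follows essentially the same route as the paper: reduce both claims to a Cauchy-integral estimate on $\mathrm{d}^m \ln f_N/\mathrm{d}z^m$, then convert the real-part (modulus) bound into a bound on the full logarithm on an inner circle. The paper carries out that second step by hand---recentering the phase via $\tilde f_N(z) = e^{-i\phi_N(t_0)}f_N(z)$ and applying the Schwarz integral formula, which uses $\max_{|\zeta|=a}|\re \ln f_N(\zeta+t_0)|\le cN$---whereas you invoke Borel--Carath\'eodory after subtracting $g_N(t_0)$; this is the same estimate in packaged form, and your remark that the definition's one-sided bound must in fact be read as the two-sided $|\ln r_N|\le cN$ from the main text (you need the lower bound only at $t_0$, the paper on the whole circle) is an accurate reading of what the argument actually requires.
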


\begin{proof}
To prove the lemma for $\partial^{m} \ln r_N(z) / \partial \beta^{m}$, we observe that
\begin{equation}
    \frac{\dInt^m}{\dInt z^m} \ln f_N(z) = \frac{\partial^m}{\partial ( -i \beta)^m} \ln r_N(z) + i \frac{\partial^m}{\partial ( -i \beta)^m} \phi_N(z),
\end{equation}
which implies
\begin{equation}
    \left| \frac{\partial^m}{\partial \beta^m} \ln r_N(t_0 - i \beta) \right|_{\beta = 0} \leq \left| \frac{\dInt^m}{\dInt z^m} \ln f_N(z) \right|_{z = t_0}.
    \label{eq:derivative}
\end{equation}
The right-hand side can be expressed using Cauchy's integral formula as
\begin{eqnarray}
    \begin{aligned}
        \left| \frac{\dInt^m}{\dInt z^m} \ln f_N(z) \right|_{z = t_0} & = \frac{m!}{2\pi} \left| \oint_{|\zeta| = a/2} \frac{\ln f_N(\zeta + t_0)}{\zeta^{m+1}} \dInt \zeta \right|.
    \end{aligned}
    \label{eq:cauchy_int}
\end{eqnarray}

It remains to bound the magnitude of $\ln f_N(\zeta)$ on the circle $| \zeta - t_0| = a/2$. However, the magnitude depends on the choice of the branch of the logarithm. To overcome this issue, we consider the function $\tilde{f}_N(z) = e^{- i \phi_N(t_0)} f_N(z)$. We can bound the Cauchy integral \eqref{eq:cauchy_int} in terms of $\ln \tilde{f}_N(z)$ instead of $\ln f_N(z)$ because the derivatives of the two functions are equal.

To complete the bound, we use the Schwarz integral formula to express
\begin{align}
    \ln \tilde{f}_N(z + t_0) &= \frac{1}{2 \pi i} \oint_{|\zeta|=a} \frac{\dInt \zeta}{\zeta} \, \frac{\zeta + z}{\zeta - z} \, \re \ln \tilde{f}_N(\zeta + t_0) \nonumber\\
    & \hspace{.5cm} + i \, \im \ln \tilde{f}_N(t_0)
\end{align}
for any $|z| < a$. Since $\im \ln \tilde{f}_N(t_0) = 0$, this yields
\begin{equation}
    |\ln \tilde{f}_N(z + t_0)| \leq \frac{a + |z|}{a - |z|} \max_{|\zeta| = a} \left| \re \ln f_N(\zeta + t_0) \right|
\end{equation}
By assumption, we have
\begin{equation}
    \max_{|\zeta| = a} \left| \re \ln f_N(\zeta + t_0) \right| \leq c N.
\end{equation}
By substituting these results into \eqref{eq:derivative} and \eqref{eq:cauchy_int}, we obtain
\begin{align}
    \left| \frac{\partial^m}{\partial \beta^m} \ln r_N(t_0 - i \beta) \right|_{\beta = 0} &\leq \frac{m! c N}{2 \pi} \oint_{|\zeta| = a/2} \frac{|\dInt \zeta|}{|\zeta|^{m+1}} \frac{a + |\zeta|}{a - |\zeta|} \nonumber \\
    &\leq 3 \times 2^m m! \frac{c}{a^m} N
\end{align}
For any constant value of $m$, the derivative of $\ln r_N$ is thus bounded by a quantity $\mathcal{O}(N)$.

The proof for $\partial^{m} \phi_N(z) / \partial t^{m}$ follows the same steps, starting from $| \partial^{m} \phi_N(t) / \partial t^{m}|_{t=t_0} \leq | \dInt^m / \dInt z^m \ln f_N(z) |_{z = t_0}$ in place of Eq.~\eqref{eq:derivative}.
\end{proof}

For simplicity, we omit the subscript $N$ below.

\subsection{Error in numerical derivatives and integration}
As a first illustration of the implications of the zero-free condition, let us deal with errors arising from numerical derivatives and integration. For the symmetric finite difference approximation to the derivative
\begin{eqnarray}
    \pd{\beta} \ln r(z)_{|\beta = 0} \approx \frac{\ln r(t - i h) - \ln r(t + i h)}{2h},
    \label{eq:der_approx_2}
\end{eqnarray}
the error is given by~\cite{Quarteroni2006}
\begin{eqnarray} \label{eq:diff_err}
    \Delta{\phi}_\mathrm{D} =  \varO(h^2)\cdot\left| \frac{\partial^3}{\partial\beta^3}\left[\ln r(z)\right]_{|\beta = 0} \right| = \varO(N h^2),
\end{eqnarray}
where the $N$ dependence directly comes from Lemma~\ref{lemma:scaling}, assuming that $\mathcal{G}(z)$ is zero free. This error will be multiplied by $t$ in our algorithm, following the integration in Eq.~(4) of the main text. We note that the error has the same scaling as the error $\Delta \phi_\mathrm{ITE}$ due to the approximate imaginary-time evolution. Therefore, a higher-order finite difference approximation will not improve the asymptotic scaling of the total error.

To compute the phase up to time $t$, we need to carry out the integral in Eq.~(4) of the main text. In practice, we can only estimate the rate of change of the phase at a discrete set of points such that the integration is necessarily approximate. We use the Newton-Cotes formula of degree $n$ with a fixed spacing $\tau$ between the samples, as is natural for a real-time Trotter step $\tau$. The corresponding numerical integration error is~\cite{Quarteroni2006}
\begin{eqnarray}
    \begin{aligned}
        \Delta{\phi}_\mathrm{I} & = \varO(t \tau^{s}) \max_{0\le t^{\prime} \le t}\left|\frac{\partial^s}{\partial t^s}\phi(t^{\prime})\right|  = \varO(Nt\tau^{s}),
    \end{aligned}
\end{eqnarray}
where $s$ equals $n+2$ rounded down to the closest even number ($n=1$, $s=2$ for the trapezoidal rule; $n=2$, $s=4$ for Simpson's rule). The dependence on $N$ again follows from Lemma~\ref{lemma:scaling}. 
The Trotter error in the real-time evolution, $\Delta \phi_\mathrm{RTE}$, typically dominates over the error in the numerical integration, as the coefficient $s = 4$ for Simpson's rule exceeds the order $p$ of practical low-order Trotter expansions.

\subsection{Trotter errors}
\label{appendix:trotter}
Trotter errors lead to an error in the estimation of the quantities
\begin{eqnarray}
    r(t\pm ih)^2 = \left| \braket{\psi^{\prime} | e^{-iHt} e^{\pm h H} |\psi} \right|^2.
\end{eqnarray}
The measured probabilities will have an \emph{additive} Trotter error that scales as $\varO(Nh^2)$ for imaginary-time evolution and as $\varO(Nt\tau^{p})$ for real-time evolution~\cite{Childs2021}. However, to bound the error on $\partial \ln r / \partial \beta$, it is necessary to control the \emph{multiplicative} error in $r$. This is challenging because $r$ may be exponentially small in the system size. Nevertheless, we show in this section that the above error scalings also apply to the  numerical derivative in Eq.~(\ref{eq:der_approx_2}) assuming that particular functions are zero free. We require that $\mathcal{G}(z)$, its Trotterized version $\mathcal{G}_{\mathrm{ITE}}(z)$ defined in Eq.~\eqref{eq:gite}, and the function $\mathcal{F}_{t_0}(z, w)$ in Eq.~(\ref{eq:fw}) are both zero free. We highlight that if these assumptions are violated, it may nevertheless be possible to compute the phase using the correction method described in Appendix~\ref{appendix:zero}.

When expanded as the Taylor series in $h$, the finite difference in Eq.~(\ref{eq:der_approx_2}) has the form
\begin{eqnarray}
    \begin{aligned}[b]
        A(t, h) & = \frac{\ln r(t - i h) - \ln r(t + i h)}{2h}\\
        & = \pd{\beta} \ln r(z)_{|\beta = 0} + \frac{h^2}{6} \frac{\partial^3}{\partial \beta^3} \ln r(z)_{|\beta = 0} + \varO(h^3)\\
        & = \pd{\beta} \ln r(z)_{|\beta = 0} + \varO(Nh^2).
    \end{aligned}
    \label{eq:finite_der}
\end{eqnarray}
The $\varO(N)$ dependence in the last line comes again from applying Lemma~\ref{lemma:scaling} to $\mathcal{G}(z)$. We will us this form below.

For simplicity, we consider Trotter errors in the imaginary-time and real-time evolution separately. It is straightforward to show that the individual errors add when imaginary-time and real-time evolution are Trotterized at the same time.

\subsubsection{Imaginary-time evolution}
\label{appendix:ITE}

In this section, we provide bounds for the errors incurred by replacing the imaginary-time evolution by a product of local unitaries. We obtain different bounds for product states and for states with exponentially decaying correlations. For product states, the imaginary-time evolution can be realized by local unitaries that have the same support as the local terms appearing the Hamiltonian.  The only error in this case arises from Trotterization. For correlated states, more complicated unitaries that act on larger subsystems are required. There are additional errors that depend on the size of the subystem on which the unitaries act.

We consider a decomposition of the Hamiltonian $H$ into $\Gamma = \varO(N)$ local terms,
\begin{eqnarray}
    H = \sum_{j=1}^{\Gamma} H_j.
\end{eqnarray}
To implement the imaginary-time evolution, we wish to find unitary operators $U_j(h)$ that satisfy
\begin{equation}
    U_j(h) \ket{\psi} = \frac{1}{c_j} e^{-h H_j} \ket{\psi} ,
    \label{eq:u}
\end{equation}
where $c_j = \sqrt{\braket{\psi | e^{-2hH_j} |\psi}}$ ensures normalization. Although these unitary operators are not unique, we may always choose them such that $U_j(h)$ is a smooth function of $h$ with $U_j(0) = \mathbb{I}$.

To bound the errors due to the imaginary-time evolution, it is necessary to be more explicit about the dependence $U_j(h)$ on $h$. To this end, we let
\begin{eqnarray}
    B_j = i \frac{\mathrm{d}}{\mathrm{d}h}U(h)_{| h = 0}
\end{eqnarray}
such that $V_j(h)=e^{-iB_j h}$ equals $U_j(h)$ to first order in $h$.
By differentiating Eq.~\eqref{eq:u}, it follows that the Hermitian operator $B_j$ satisfies
\begin{equation}
    B_j \ket{\psi} = i (H_j - \langle H_j \rangle ) \ket{\psi}.
    \label{eq:u}
\end{equation}
This equation always has a (non-unique) solution as it specifies a single column of the matrix representation of $B_j$ in an orthonormal basis that includes $\ket{\psi}$ as a basis vector.

For a particular choice of $B_j$, we replace the imaginary-time evolution by $\prod_{j=1}^{\Gamma} c_j V_j(h)$. The corresponding Loschmidt amplitude is given by
\begin{equation}
    \mathcal{G}_{\mathrm{ITE}}(t - i h) = \braket{\psi' | e^{- i H t} \prod_{j=1}^\Gamma c_j V_j(h) | \psi}.
    \label{eq:gite}
\end{equation}
The unitary operations can be directly realized experimentally and it is straightforward to keep track of the constants $c_j$ classically. We note that the ordering of the operators can be chosen arbitrarily as it does not affect the result to lowest order in $h$. Indeed, one can verify by direct calculation that
\begin{eqnarray}
    \pd{\beta} \ln r(z)_{|\beta = 0} = \pd{\beta} \ln r_{\mathrm{ITE}}(z)_{|\beta = 0},
\end{eqnarray}
where $r_\mathrm{ITE}(z) = | \mathcal{G}_{\mathrm{ITE}}(z) |$.

To bound the error of the finite-difference $A_{\mathrm{ITE}}(t,h) = [\ln r_\mathrm{ITE}(t-ih) - \ln r_\mathrm{ITE}(t+ih)] / 2h$, it is necessary to derive a bound on the higher derivatives of $\ln r_\mathrm{ITE}(z)$. However, $\mathcal{G}_\mathrm{ITE}(z)$ is not holomorphic due to the separate dependence on the real and imaginary parts of $z$. To be able to apply Lemma~\ref{lemma:scaling}, we define
\begin{equation}
    f_t(z) = \braket{\psi' | e^{- i H t} \prod_{j = 1}^\Gamma  \sqrt{\braket{\psi | e^{- 2 z H_j} | \psi}} {e^{i z B_j}} | \psi},
\end{equation}
which is holomorphic in $z$ and satisfies
\begin{equation}
    \frac{\partial^m}{\partial \beta^m} \ln r_\mathrm{ITE}(t - i \beta) = \frac{\partial^m}{\partial \beta^m} \ln | f_t(\beta) | .
\end{equation}
If we assume that $f_t(z)$ is zero free at $z=0$, then Lemma~\ref{lemma:scaling} implies
\begin{eqnarray}
    \frac{h^2}{6}\frac{\partial^3}{\partial \beta^3} \ln r_{\mathrm{ITE}}(z)_{|\beta = 0} = \varO(Nh^2),
\end{eqnarray}
from which it follows that
\begin{eqnarray}
    |A_{\mathrm{ITE}}(t,h) - A(t,h)| = \varO(Nh^2).
    \label{eq:error_ITE1}
\end{eqnarray}

When the initial state $\ket{\psi}$ is a product state, the support of $U_j(h)$ can be restricted to the sites on which $H_j$ acts. Therefore, $B_j$ can be computed efficiently on a classical computer.
For more general states, it becomes impractical to solve Eq.~\eqref{eq:u} exactly. Nevertheless, if the correlations of the state $\ket{\psi}$ decay exponentially, then $U_j(h)$ can be approximated by a local unitary operator $\tilde U_j(h)$ such that
\begin{equation}
    \left\lVert \prod_{j=1}^\Gamma \tilde{U}_j(h) \ket{\psi} -  \prod_{j=1}^\Gamma \frac{e^{- h H_j}}{c_j} \ket{\psi} \right\rVert \leq \eta .
\end{equation}
According to Theorem 1 in the supplemental information of the paper by Motta et al.~\cite{Motta2020}, it is sufficient for $\tilde U_j(h)$ to act on
\begin{eqnarray}
    N_q = \varO\left(\left[\xi  \ln (N / \eta) \right]^d\right)
\end{eqnarray}
qubits, where $d$ is the spatial dimension and $\xi$ is the maximum correlation length of the sequence of states $\prod_{j=1}^k \tilde{U}_j(h) \ket{\psi}$ for $0 \leq k \leq \Gamma$. Since the value of $h$ is small, we expect that the correlation length of $\ket{\psi}$ is typically a good estimate of $\xi$.

Replacing the unitary operators $U_j(h)$ by their local counterparts $\tilde{U}_j(h)$ results in a modified Loschmidt echo $\tilde{r}_\mathrm{ITE}(z)$ whose error is bounded by
\begin{equation}
    | \tilde{r}_\mathrm{ITE}(z) - r_\mathrm{ITE}(z) | \leq \eta .
\end{equation}
This leads to an error in the finite difference given by
\begin{equation}
    |\tilde{A}_\mathrm{ITE}(t, h) - A_\mathrm{ITE}(t, h)| = \mathcal{O}( \eta / h r(t) ) ,
\end{equation}
which is to be added to the discretization error of Eq.~(\ref{eq:error_ITE1}). For both errors to be bounded by $\epsilon$, we let $h = \mathcal{O}(\epsilon^{1/2} / N^{1/2})$ and $\eta = \mathcal{O}(r(t) \epsilon^{1/2} / N^{1/2})$.

To obtain $\tilde{U}_j$, tomography on $N_q$ sites is required, which incurs a computational cost that scales as $\exp(\varO(N_q))$. The cost is polynomial in the system size $N$ and the inverse of desired error $1/\epsilon$ for one-dimensional systems and quasi-polynomial in $d \ge 2$ dimensions, provided $r(t)$ is bounded from below by an inverse polynomial in $N$.

\subsubsection{Real-time evolution}
\label{appendix:RTE}
For the real-time evolution, we consider
\begin{equation}
    \label{eq:grte}
    \mathcal{G}_{\mathrm{RTE}}(t \pm i h) = \braket{\psi' | U_\mathrm{RTE}(t) e^{\pm h H} | \psi},
\end{equation}
where $U_{\mathrm{RTE}}(t)=U_{p}(t / D)^{D}$ is $p$-th order Trotter decomposition of the real-time evolution with a total number of $D$ Trotter steps. We define the multiplicative error operator $M(t)$ by
\begin{eqnarray}
    U_{\mathrm{RTE}}(t) = [1 + M(t)] e^{- i H t}.
\end{eqnarray}
The error operator is bounded in operator norm by $\Vert M(t) \Vert = \varO(Nt\tau^{p})$~\cite{Childs2021}, where $\tau = t / D$~\cite{Childs2021}.

We define $A_{\mathrm{RTE}}(t,h)$ as the approximation to the finite difference $A(t, h)$ with $r_\mathrm{RTE}(z) = |\mathcal{G}_{\mathrm{RTE}}(z)|$ in place of $r(z)$. The difference $A_{\mathrm{RTE}}(t,h) - A(t,h)$ is given by
\begin{eqnarray}
    \frac{1}{2}\pd{\beta} \ln g(z)_{|\beta=0} + \frac{h^2}{12} \frac{\partial^3}{\partial \beta^3} \ln g(z)_{|\beta = 0} + \varO(h^3),
    \label{eq:RTE_diff}
\end{eqnarray}
where
\begin{eqnarray}
    \label{eq:g}
    \begin{aligned}
        g(z) = \frac{r_{\mathrm{RTE}}(z)^2}{r(z)^2} = \left|1 + \frac{\braket{ \psi' | M(t) | \psi(z) }}{\braket{ \psi' | \psi(z) }}\right|^2,
    \end{aligned}
\end{eqnarray}
with $\ket{\psi(z)} = e^{-i H z} \ket{\psi}$. In contrast to the imaginary-time evolution, the first-order derivative does not cancel. It contributes the leading-order error, which we analyze in what follows.

We would like to again apply Lemma~\ref{lemma:scaling}, which leads us to define
\begin{eqnarray}
    \mathcal{F}_{t_0}(z, w) = \braket{ \psi' |  e^{M(t_0)w} | \psi(z) },
    \label{eq:fw}
\end{eqnarray}
where $w$ is an independent complex variable from $z$. With this definition, $g(z) = |1 + \frac{\mathrm{d}}{\mathrm{d}w} (\ln\mathcal{F}_{t}) (z, 0)|^2$. Let us assume that $\mathcal{F}_{t_0}(z, w)$ is zero-free for both $z$ and $w$ at $z = t_0$ and $w=0$, i.e., $\ln |\mathcal{F}_{t_0} (z,w) | \le c N$ for all $z, w$ such that $|z-t_0| < a_z$ and $|w| < a_w$. Since $\Vert M(t_0) \Vert = \varO(Nt_0 \tau^{p})$, it is natural to choose $a_w = \varO(1 / (t_0 \tau^{p}))$, while keeping $a_z = \varO(1)$ and  $c = \varO(1)$. According to Lemma~\ref{lemma:scaling}, we get $| \frac{\mathrm{d}}{\mathrm{d}w} (\ln\mathcal{F}_{t_0}) (z, 0) | = \varO(N / a_w) = \varO(Nt_0 \tau^{p})$. Assuming that $N t_0 \tau^p$ is small, it follows that $g(z) -  1 = \varO(Nt_0\tau^{p})$ to leading order for $|z - t_0| < a_z$.

Finally, we have to confirm that the partial derivative in Eq.~(\ref{eq:RTE_diff}) does not change the system-size dependence. The leading term at $t = t_0$ is
\begin{equation}
    \frac{1}{2} \frac{\partial}{\partial \beta} g(z)_{| z = t_0} \approx \re~\pd{\beta}\frac{\mathrm{d}}{\mathrm{d}w} \ln \mathcal{F}_{t_0}(z,w)_{|z = t_0, w=0}.
    \label{eq:fw2}
\end{equation}
Similar to the proof of Lemma~\ref{lemma:scaling}, Eq.~(\ref{eq:fw2}) can be bounded as 
\begin{eqnarray}
    \begin{aligned}
        & \left| \re~\pd{\beta}\frac{\mathrm{d}}{\mathrm{d}w} \ln \mathcal{F}_{t_0}(z,w)_{|z = t_0, w=0} \right| \\
        & \le \left| \frac{\mathrm{d}}{\mathrm{d} z}\frac{\mathrm{d}}{\mathrm{d}w} \ln \mathcal{F}_{t_0}(z,w)_{|z = t_0, w=0} \right| \\
        & = \frac{1}{(2\pi)^2} \left| \oint_{|\zeta| = a_z/2} \oint_{|\eta| = a_w/2} \frac{\ln \mathcal{F}_{t_0}(\zeta + t_0, \eta)}{\zeta^{2} \eta^2} \dInt \zeta \dInt \eta \right| \\
        & \le \frac{c N}{(2 \pi)^2} \oint_{|\zeta| = a_z/2} \oint_{|\eta| = a_w/2} \frac{|\dInt \zeta| |\dInt \eta|}{|\zeta\eta|^{2}}  \cdot \frac{a_z + |\zeta|}{a_z - |\zeta|} \le \frac{12 cN}{a_z a_w}.
    \end{aligned}
\end{eqnarray}
Thus Eq.~(\ref{eq:fw2}) still scales as $c N / a_z a_w = \varO(N t_0 \tau^p)$ and we find that, to leading order,
\begin{eqnarray}
    A_{\mathrm{RTE}}(t,h) - A(t, h) = \varO(Nt\tau^{p}).
\end{eqnarray}

\section{Resource cost}
\label{appendix:existing_method}

\subsection{Estimating the magnitude $r$}
An individual circuit to measure the magnitude $r$ is required for our method and for sequential interferometry. It consists of Trotterized real-time evolution, which incurs a standard $p$-th order Trotter error~\cite{Childs2021} and statistical error
\begin{eqnarray}
    \Delta r = \varO\left(Nt \tau^{p} + \frac{1}{\sqrt{M}}\right),
\end{eqnarray}
where $M$ is the number of measurements. To bound this error within $\epsilon$, we require a circuit depth $D_r = \varO(t / \tau) = \varO(  t^{1+\frac{1}{p}} N^{\frac{1}{p}} / \epsilon^{\frac{1}{p}} )$ and the number of measurements $M_r = \varO \left( 1 /\epsilon^2 \right)$.

\subsection{Hadamard test}

\begin{figure}[H]
    \centering
    \includegraphics[width=.45\textwidth,trim={0 1.1cm 0 1.1cm},clip]{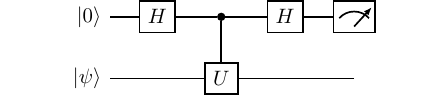}
    \caption{Quantum circuit diagram of the Hadamard test.}
    \label{fig:hadamard}
\end{figure}

The Hadamard test is the standard method to compute the real and imaginary part of $\braket{\psi | U | \psi}$ for a given unitary $U$ and initial state $\ket{\psi}$. The circuit of the Hadamard test is shown in Fig.~\ref{fig:hadamard}. One first applies a Hadamard gate to an ancillary qubit. It is followed by a controlled unitary c-$U$ acting on the prepared state $\ket{\psi}$ conditioned on the first ancillary qubit. Finally, apply the Hadamard gate again to the ancilla and measure this qubit. The probability of measuring $0$ is
\begin{eqnarray}
    \frac{1 + \re~\mathcal{G}(t)}{2} = \frac{1 + r(t)\cos\phi(t)}{2}.
\end{eqnarray}
To measure the imaginary part, we modify the circuit by adding a $S^{\dagger} = \begin{pmatrix}
    1 & 0 \\ 0 & -i
\end{pmatrix}$ gate after the first Hadamard gate.

It is hence possible to infer $\mathcal{G}(t)$ directly from the measured probabilities. Shot noise gives rise to the statistical error
\begin{eqnarray}
    \Delta{\mathcal{G}}_{\mathrm{S}}(t) = \varO\left(  1 / \sqrt{M} \right),
\end{eqnarray}
where $M$ is the number of measurements. Compared to our proposed method, there is no need of integration, so the Trotter error does not have an additional $t$ dependence:
\begin{eqnarray}
    \Delta{\mathcal{G}}_{\mathrm{RTE}} (t) = \varO(N t\tau^p).
\end{eqnarray}
The total error is hence given by
\begin{eqnarray}    \begin{aligned}
        \Delta{\mathcal{G}}(N, t, p, \tau, M) = \varO\left(Nt\tau^{p} + \frac{1}{\sqrt{M}} \right).
    \end{aligned}
\end{eqnarray}
Note that this only gives the phase of a single time $t$ whereas our algorithm measures the phase on an interval $[0, t]$.

To bound the error $\Delta{\mathcal{G}} < \epsilon$, we need a Trotter step $\tau = \varO ( ( \epsilon / N t )^\frac{1}{p} )$ and a number of measurements $M = \varO ( 1  / \epsilon^2)$. If given access to global controlled unitary evolution, the corresponding circuit depth is $D = \varO(t / \tau) = \varO( t^{1+\frac{1}{p}} N^{\frac{1}{p}} / \epsilon^{\frac{1}{p}} )$. However, on current devices usually only local gates are available. In this case there are two choices to implement this controlled unitary:
\begin{enumerate}
    \item After Trotterization, swap the qubits of each local unitary next to the control qubit to perform local controlled evolution. The swapping process will increase the depth of circuit by $\varO(N^{1+1/d})$ times, and thus $D = \varO(t / \tau) = \varO( t^{1+\frac{1}{p}} N^{1 + \frac{1}{p} + \frac{1}{d}} / \epsilon^{\frac{1}{p}} )$.
    
    \item Distribute the control qubit onto a Greenber\-ger–Horne–Zeilinger (GHZ) state of $\varO(N)$ ancillary qubits~\cite{Monz2011, Song2017, Friis2018, Omran2019, Wei2020}. This is, however, challenging on current devices due to the large demand of ancillary qubits, and the consequently greater rate of decoherence~\cite{Monz2011, Ozaeta2019} compared to methods that requires only one or no ancillary qubit.
\end{enumerate}

\subsection{Sequential interferometry}
Suppose we know $\braket{\psi_i| e^{-iHt} | \psi_i}$ for some state $\ket{\psi_i}$. It is then possible to compute the Loschmidt amplitude $\braket{\psi_j| e^{-iHt} | \psi_j}$ by preparing superpositions of $\ket{\psi_i}$ and $\ket{\psi_j}$ with a tunable phase difference $\theta$~\cite{Lu2021}. We denote $V_{ij}(\theta)$ a unitary that prepares such a state, i.e.,
\begin{equation}
    V_{ij}(\theta)\ket{\psi_i} = \frac{1}{\sqrt{2}} \left(\ket{\psi_i} + e^{i\theta} \ket{\psi_j} \right).
    \label{eq:V_ij}
\end{equation}
For simplicity, we assume $\ket{\psi_j}$ is orthogonal to $\ket{\psi_i}$, although the procedure can be readily generalized to non-orthogonal states.

Let us introduce the notation
\begin{eqnarray}
    \braket{\psi_x| e^{-iHt} | \psi_y} = r_{xy} e^{i\phi_{xy}}.
\end{eqnarray}
We have access to all the $r$'s from direct measurement of probabilities and we know $\phi_{ii}$ by assumption. The goal is to determine $\phi_{jj}$. To do so, we can follow a two-step procedure:

\begin{enumerate}
    \item Determine the phase of the cross term, $\phi_{ij} = -\phi_{ji}$. We need to measure
          \begin{eqnarray}
              \begin{aligned}
                    & \left|\braket{\psi_i | e^{-iHt} V_{ij}(\theta)| \psi_i}\right|^2                                 \\
                  = & \frac{1}{2} \left[ r_{ii}^2 +r_{ij}^2 + 2r_{ii} r_{ij} \cos(\phi_{ij}-\phi_{ii}+\theta) \right].
              \end{aligned}
          \end{eqnarray}
          In the case when $r_{ij}\neq 0$, only two different values of $\theta$ are needed to determine $\phi_{ij}-\phi_{ii}$ and thus $\phi_{ij}$.
    \item Now measure
          \begin{eqnarray}
              \begin{aligned}
                    & \left|\braket{\psi_j | e^{-iHt} V_{ij}(\theta)| \psi_i}\right|^2                               \\
                  = & \frac{1}{2} \left[r_{jj}^2 +r_{ij}^2 + 2r_{jj} r_{ij} \cos(\phi_{jj}-\phi_{ij}+\theta)\right].
              \end{aligned}
          \end{eqnarray}
          Again, two different value of $\theta$ are sufficient to determine $\phi_{jj}$.
\end{enumerate}

To prepare a global superposition state as in Eq.~(\ref{eq:V_ij}) can be difficult. If the target state and the state for which the phase is known are both product states, it is possible to repeat the interferometry by flipping one or a few spins each time. Such a sequential approach requires only local gates. For a single call of this algorithm, the error in the phase difference $\phi_{jj} - \phi_{ii}$ is bounded by the error of extracting the phases in the terms $r_{ii} r_{ij} \cos(\phi_{ij}-\phi_{ii}+\theta)$ and $r_{jj} r_{ij} \cos(\phi_{jj}-\phi_{ij}+\theta)$. Hence, for a single step,
\begin{eqnarray}
    \Delta \phi = \varO \left( Nt \tau^{p} +  \frac{1}{ \tilde{r}_{ij}^2 \sqrt{M} } \right),
\end{eqnarray}
where
\begin{eqnarray}
    \frac{1}{\tilde{r}_{ij}^2} = \frac{1}{r_{ii} r_{ij}} + \frac{1}{r_{ij} r_{jj}}.
\end{eqnarray}

For the sequential version, on average $\varO(N)$ calls are required for an arbitrary product state. Thus, the total phase error $\Delta \phi_N$ will be amplified as
\begin{eqnarray}
    \Delta \phi_N = \varO \left( N^2 t \tau^{p} +  \frac{N\tilde{I} }{\sqrt{M}} \right),
    \label{eq:amplified}
\end{eqnarray}
where
\begin{eqnarray}
    \tilde{I} = \frac{1}{\lambda N} \sum_{i=0}^{\lambda N-1} \frac{1}{\tilde{r}_{i,i+1}^2},
\end{eqnarray}
for a total number of $\lambda N$ steps in the sequence. We assume that $\lambda = \varO(1)$ and omit it in Eq.~\eqref{eq:amplified}. Therefore, to control $r \Delta \phi_N$ within $\epsilon$, one arrives at $D = \varO( r^{\frac{1}{p}} t^{1+\frac{1}{p}} N^{\frac{2}{p}}  / \epsilon^{\frac{1}{p}})$ and $M = \varO(\tilde{I}^2 r^2 N^2 / \epsilon^2)$. 

We note that the above procedure fails when $r_{ij} \approx 0$. In this case, the following quantity is simplified as
\begin{eqnarray}
    \begin{aligned}
                & \left| \braket{\psi_i | V_{ij}^{\dagger}(0) e^{-iHt} V_{ij}(\theta)| \psi_i}\right|^2         \\
        \approx & \frac{1}{4} \left[r_{ii}^2 +r_{jj}^2 + 2r_{ii} r_{jj} \cos(\phi_{jj}-\phi_{ii}+\theta)\right]
    \end{aligned}
\end{eqnarray}
so that we can measure it for different values of $\theta$ and directly obtain the phase difference between $\phi_{ii}$ and $\phi_{jj}$. The definition of $\tilde r_{ij}$ will be modified accordingly.

\section{Zeros in Loschmidt amplitude}
\label{appendix:zero}

\begin{figure}
    \centering
    \includegraphics[width=.49\textwidth]{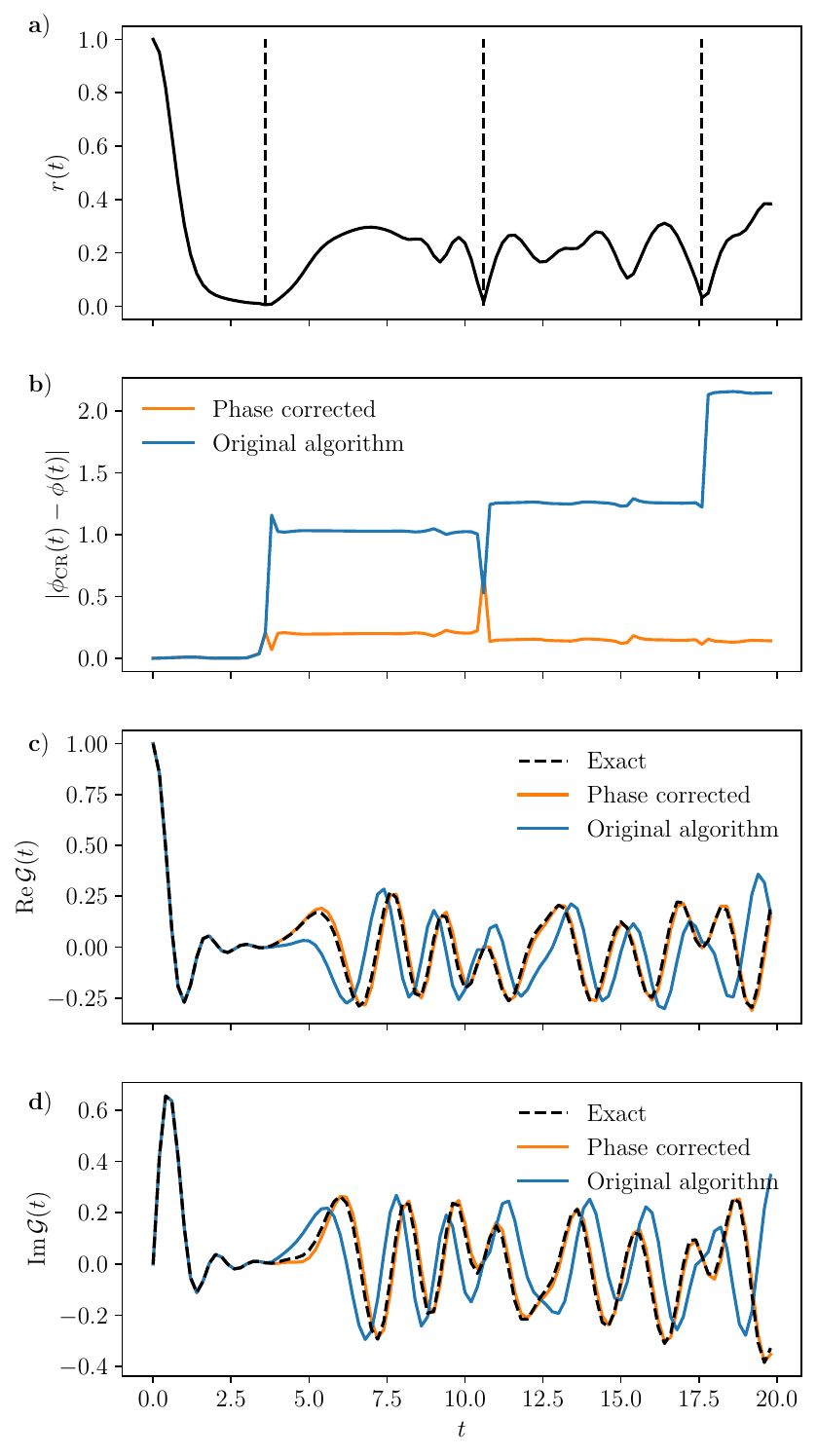}
    \caption{An example of phase correction when $\mathcal{G}(t)$ is close to 0. Here the Hamiltonian coefficients are $(J,g)=(1,1)$ and the initial state is still $\ket{\psi} = \ket{\psi'} = \ket{\uparrow\uparrow\cdots}$. The system size $N=10$. The Trotter steps are $\tau=0.2$ and $h=0.01$. \textbf{(a)}~The magnitude $r(t)$. The dashed lines indicate times at which $r(t)$ almost vanishes. \textbf{(b)}~The phase difference between the estimated and the exact value. \textbf{(c-d)}~The real and imaginary parts of $\mathcal{G}(t)$.
        Original algorithm corresponds to our algorithm described in main text. In the ``phase corrected'' results, we added a discrete phase jumpy $\pi$ as well as a phase shift $\delta$, which ensures continuity of the first derivative of $\mathcal{G}(t)$.}
    \label{fig:loschmidt}
\end{figure}

The Cauchy-Riemann equation for $\ln \mathcal{G}(z)$ holds only when $\mathcal{G}(z)\neq 0$. When $\mathcal{G}(z)$ crosses a zero at some time $t_0$, it is subject to a phase jump. In the case when we have access to arbitrary resolution and precision, the phase factor is $(-1)^{n_0}$, where $n_0$ is the smallest integer such that the $n_0$-th order derivative of $\mathcal{G}(z)$ is nonzero at $t_0$. This can be seen from the Taylor series expansion of $\mathcal{G}(z)$ at $t_0$:
\begin{eqnarray}
    \mathcal{G}(z) = \frac{1}{n_0!} \mathcal{G}^{(n_0)}(t_0) (z - t_0)^n + \varO\left((z-t_0)^{n_0+1}\right).
\end{eqnarray}
For a small $\epsilon > 0$., approximately
\begin{eqnarray}
    \mathcal{G}(t_0 - \epsilon) \approx (-1)^{n_0} \mathcal{G}(t_0 + \epsilon)
\end{eqnarray}
and thus
\begin{eqnarray}
    \lim_{t\to t_0^{+}}\phi(t) \equiv \lim_{t\to t_0^{-}}\phi(t) + n_0 \pi \pmod{2\pi}.
\end{eqnarray}

In practice, with finite time resolution (e.g., the Trotter time), the gradient of the phase becomes singular near $t_0$ (or even some point where $\mathcal{G}(t)$ is almost zero). Even if we have included the $(-1)^{n_0}$ phase jump, this singularity can cause an additional phase factor $e^{i\delta}$ when going across $t_0$. Thus our algorithm will give a numerical result $\tilde{\mathcal{G}}(z)$, where
\begin{eqnarray}
    \tilde{\mathcal{G}}(t) = \left\{
    \begin{aligned}
         & \mathcal{G}(t),            & t < t_0; \\
         & \mathcal{G}(t)e^{i\delta}, & t > t_0.
    \end{aligned}
    \right.
\end{eqnarray}

It is possible to correct the phase error $\delta$ by requiring that the $n_0$-th derivative of $\mathcal{G}$ be continuous. In particular, we can directly estimate $\delta$ from the expression
\begin{eqnarray}
    e^{i \delta} = \lim_{t\to t_0^{+}}\tilde{\mathcal{G}}^{(n_0)}(t) /  \lim_{t\to t_0^{-}}\tilde{\mathcal{G}}^{(n_0)}(t).
\end{eqnarray}
In the discrete time setting, we evaluate the limits at the closest points on either side of the zero.

An example of the algorithm in the presence of small values of $r(t)$ is shown in Fig.~\ref{fig:loschmidt}. While there are no exact zeros in the Trotterized simulation, $r(t)$ becomes very small at the dashed lines in panel (a). In the original algorithm, this leads to large phase jumps because the discretization is too coarse. As shown in panels (b)--(c), it is possible to partially correct these jumps using the method described above, where we impose continuity of the first derivative close at the small values of $r(t)$.

\section{Extension to multiple time evolution operators}
\label{appendix:multi_evol}

It is possible to extend our algorithm to multiple time evolution operators. For example, one may consider functions of the form
\begin{equation}
\mathcal{G}_A(t)=\langle \psi'| e^{-iH't} A e^{-iHt} |\psi \rangle,
\end{equation}
where the times in both evolutions are the same and $A$ can be any unitary operation. Here, $\mathcal{G}_A(t)$ is again an analytic function of $t$, so that the phase can directly be extracted by computing the derivative along the imaginary direction. The imaginary-time evolution can be simultaneously applied on both sides under the condition both $|\psi\rangle$ and $|\psi'\rangle$ are short-range correlated.

Our approach can also be generalized for $n$ evolution operators with different time variables:
\begin{eqnarray}
    \mathcal{G}(t_1, \cdots, t_n) = \braket{\psi^{\prime} | U_1 O_{1} U_{2} O_2 \cdots O_{n-1} U_n | \psi},
\end{eqnarray}
where $U_j =  e^{-iH_j t_j}$ and unitaries $O_j$ are local for $j > 1$. One can compute these quantities using the following protocol.

\begin{itemize}
    \item First switch on only $U_1$, i.e., set $t_2 = \cdots = t_n = 0$. Our algorithm works since the new intial state $O_1\cdots O_{n-1}\ket{\psi}$ still has finite correlation length.
    \item Then switch on $U_2$ as well. For each fixed $t_1$, perform our algorithm with initial state $O_2\cdots O_{n-1}\ket{\psi}$, evolution operator $e^{-iH_2 t_2}$ and final state $e^{iH_1t_1}\ket{\psi^{\prime}}$. The phase at $t_2=0$ has already been determined in the previous step.
    \item Switch on the rest of the evolution unitaries one by one.
\end{itemize}

\section{Applications}
Our algorithm is well suited to tackle a broad spectrum of problems within the domains of quantum chemistry and condensed matter physics. Below we focus on three particularly important examples: (A)~probing thermal expectation values, (B)~quantum eigenvalue estimation, and (C)~computing temporal correlation functions.

\subsection{Probing thermal properties}
Recently, a quantum algorithm was developed to probe finite-energy properties of quantum many-body systems through spectrum filtering~\cite{Lu2021}. In fact, this approach has already been implemented in a proof-of-principle experiment exploring a small instance of the Fermi-Hubbard model~\cite{Hemery2023}.
The main idea is to define the Gaussian energy filter 
\begin{eqnarray}
    P_{\delta}(E) = \exp\left[ -\frac{(H-E)^2}{2\delta^2} \right]
\end{eqnarray}
and apply it to an initial (product) state $\ket{\psi}$ whose mean energy is close to $E$. We note that it is possible to reach any constant energy density above the ground state by considering the tensor product of ``patches'', where larger patches allow for lower energies. For a given unitary observable $A$, the microcanonical ensemble average at the corresponding energy can be obtained from
\begin{eqnarray}
    A_{\delta,\psi}(E) = \frac{\braket{\psi | P_{\delta}(E) A P_{\delta}(E)| \psi}}{\braket{\psi | P_{\delta}(E)^2 | \psi}}
\end{eqnarray}
for sufficiently small values of $\delta$. Alternatively, one can filter the whole spectrum as 
\begin{eqnarray}
    A_{\delta}(E) = \frac{ \tr \left[ A P_{\delta}(E) \right] }{ \tr \left[ P_{\delta}(E) \right] }
\end{eqnarray}
and use classical Monte Carlo method to compute the trace.

The filter $P_{\delta}(E)$ can be approximated with \emph{a time series} and thus the algorithm only requires the measurement of quantities of the form
\begin{eqnarray}
    \braket{\psi | e^{iHt_1} A e^{-iHt_2} | \psi} \quad \text{or} \quad \braket{\psi | A e^{-iHt} | \psi }.
\end{eqnarray}
These quantities can efficiently be computed using the strategies outlined in section~\ref{appendix:multi_evol}.

\subsection{Quantum eigenvalue estimation}

The estimation of quantum eigenvalues, including the ground and excited state energies of a given Hamiltonian, is a crucial problem in quantum chemistry and condensed matter physics. Even though this problem can be solved using the quantum phase estimation algorithm, the resource requirements are prohibitive for current quantum computers. As a more efficient alternative, several recent proposals showed that eigenvalues can be estimated, even with the Heisenberg-limited scaling, by measuring Loschmidt amplitudes at different times followed by classical signal processing~\cite{Somma2019, Lin2022, Ding2023, Ni2023}. Our approach can replace the Hadamard test in these algorithms if the initial state is short-range correlated.

The computation of the local density of states in Fig.~3c of the main text follows the same idea, although we use a Fourier transform instead of more refined signal processing. This simple approach already yields valuable information about the energy eigenvalues, including the ground state energy, despite using a product state as the initial state. In the setting of quantum chemistry, the use of product states can be justified by the fact that the lowest energy Hartree-Fock has a substantial overlap with the ground state for a large class of molecules~\cite{Bauer2020}. In all cases, the overlap may be increased by a shallow circuit or an adiabatic evolution that maintains a short correlation length.

\subsection{Temporal correlation functions}
Temporal correlation functions play an important role in condensed matter physics, for instance in the study of transport and diffusion. They typically take the form
\begin{eqnarray}
    C_{A, B}(t) = \braket{ e^{-iHt} A e^{iHt} B },
    \label{eq:corr}
\end{eqnarray}
where $\braket{\cdots}$ represents the thermal average. Correlation functions of this type turn out to be nontrivial and hard to compute for many models even at infinite temperature~\cite{Agarwal2015, Kanasz2017, Parker2019}. At infinite temperature the thermal average can be obtained by sampling random product states, with our algorithm providing the complex expectation value for each of them (If $B$ is local, it can be absorbed into $\ket{\psi}$ in the first protocol of section~\ref{appendix:multi_evol}). Finite temperatures may be accessible by combining the approach with the above spectral filtering algorithm.

Other quantities of interest are the out-of-time-order correlators (OTOCs)~\cite{swingle2018}. OTOCs are widely used as a measure of information scrambling, and are typically defined as 
\begin{eqnarray}
    C_{\mathrm{OTOC}}(t) = - \braket{[W(t),V(0)]^2}
\end{eqnarray}
for given operators $W$ and $V$. It has recently been shown that also the imaginary part of OTOCs can contain relevant information~\cite{Sajjan2023}. Combined with the extension in Appendix~\ref{appendix:multi_evol}, measuring complex OTOCs is also possible with our algorithm.

\section{Numerical details}
\subsection{Quantum imaginary-time evolution of transverse field Ising chain}
\label{appendix:ITE_ising}
The Hamiltonian of the transverse field Ising model is given by
\begin{eqnarray}
    H = -J\sum_{i=1}^{N-1}S^z_i S^z_{i+1} + g\sum_{i=1}^{N} S^x_i.
\end{eqnarray}
When applying $e^{\pm h H}$ on a product state $\ket{\psi}$ in the computational basis $\set{\ket{\uparrow}, \ket{\downarrow}}^{\otimes N}$, it can be Trotterized as
\begin{eqnarray}
    \ket{\psi_{\pm}} \approx e^{\pm h H_2} e^{\pm h H_1} \ket{\psi} / c_{\pm},
    \label{eq:ITE_state}
\end{eqnarray}
where
\begin{eqnarray}
    \begin{aligned}
        H_1 = -J\sum_{i=1}^{N-1}S^z_i S^z_{i+1},\quad
        H_2 = g\sum_{i=1}^{N} S^x_i.
    \end{aligned}
\end{eqnarray}
$H_1$ only leads to a rescaling factor $\exp({\pm h \braket{\psi | H_1 | \psi}})$ but will not change the normalized state. For $H_2$, since
\begin{eqnarray}
    e^{a S^x} = \cosh \frac{a}{2} \id + \sinh \frac{a}{2} \sigma^x,
\end{eqnarray}
it follows that
\begin{eqnarray}
    \begin{aligned}
        e^{\pm h g S_x} \ket{\uparrow}   & = \cosh \frac{hg}{2} \ket{\uparrow} \pm \sinh \frac{hg}{2} \ket{\downarrow}, \\
        e^{\pm h g S_x} \ket{\downarrow} & = \cosh \frac{hg}{2} \ket{\downarrow} \pm \sinh \frac{hg}{2}\ket{\uparrow}.
    \end{aligned}
\end{eqnarray}
Thus for each spin, there is an additional rescaling factor
\begin{eqnarray}
    \sqrt{\cosh^2 \frac{hg}{2} + \sinh^2 \frac{hg}{2}} = \sqrt{\cosh (hg)}
\end{eqnarray}
and the spin is rotated by an angle
\begin{eqnarray}
    \theta = \arctan \tanh \frac{hg}{2}.
\end{eqnarray}
The total rescaling factor is
\begin{eqnarray}
    c_{\pm} = e^{\pm h \braket{\psi | H_1 | \psi}}\cdot \cosh (hg)^{\frac{N}{2}}.
\end{eqnarray}
For the $\ket{Z+}=\ket{\uparrow\uparrow\uparrow\cdots}$ state,
\begin{eqnarray}
    c_{\pm}^{Z+} = \left[ \exp\left(\mp\frac{hJ}{4} \cdot \frac{N-1}{N} \right) \sqrt{\cosh(hg)} \right]^{N}.
\end{eqnarray}

\subsection{Matrix product state simulations}
In Fig.~2 of the main text, the imaginary-time evolution is described in section~\ref{appendix:ITE_ising}. Given the normalized product states $\ket{\psi_{\pm}}$ from Eq.~(\ref{eq:ITE_state}), the real-time evolution is simulated with matrix product states (MPS) using the time-evolving block decimation (TEBD) algorithm of second order Trotterization. The bond dimension is chosen to be 200, which is sufficient for convergence in Fig.~2.

\subsection{Discrete Fourier transform}
\label{appendix:disc_fourier}
As we only have access to the Loschmidt amplitude up to finite maximal time $t_{\max}$ and with finite resolution $\tau$, we can only perform discrete (inverse) Fourier transform instead of the continuous one. Let $g_{k} = \mathcal{G}(k\tau)$ with $t_{\max} = K\tau$, where $\tau$ is the Trotter step. The discrete inverse Fourier transform of $g_k$ has the form
\begin{eqnarray}
    d_l = \frac{\tau}{2\pi}\sum_{k=0}^{K-1} g_ke^{i2\pi kl/K},
\end{eqnarray}
for $0\le l \le K-1$. For $d_l$ to approximate the LDOS $d(l \eta)$ at energy $l \eta$, where $\eta$ is the energy resolution, it should hold that
\begin{eqnarray}
    2\pi /K = \eta \tau,
\end{eqnarray}
which gives
\begin{eqnarray}
    \eta = 2 \pi / t_{\max}.
\end{eqnarray}
In the case when $\ket{\psi} = \ket{\psi^{\prime}}$, we additionally know that $\mathcal{G}(-t)=\mathcal{G}(t)^*$. Therefore the effective maximal evolution time is doubled and $\eta = \pi / t_{\max}$.

The range of energy is given by $ K\eta = 2\pi / \tau$. The obtained discrete LDOS is periodic. We shift the range to include the mean energy of the initial product state.

\end{document}